\documentclass[11pt]{article}

\usepackage[letterpaper,margin=1in]{geometry}
\usepackage[utf8]{inputenc}
\usepackage[T1]{fontenc}
\usepackage{lmodern}
\usepackage{microtype}

\usepackage{amsmath,amssymb,amsfonts,amsthm,mathtools}

\usepackage{graphicx}
\usepackage{booktabs}

\usepackage{enumitem}
\usepackage{xcolor}
\usepackage{url}

\usepackage[hidelinks]{hyperref}
\usepackage[nameinlink,noabbrev]{cleveref}

\usepackage{algorithm}
\usepackage{algpseudocode}

\usepackage{mdframed}

\newtheorem{definition}{Definition}
\newtheorem{theorem}{Theorem}

\newtheorem{assumption}{Assumption}

\newcommand{\sysname}{\textsc{RTW-A}}
\newcommand{\sscgv}{\textsc{SSCGV}}
\newcommand{\srpo}{\textsc{SRPO}}
\newcommand{\llm}{\mathcal{L}}
\newcommand{\rUp}{R^{\uparrow}}
\newcommand{\rOpq}{R^{\circ}}
\newcommand{\wTau}{W^{\tau}}
\newcommand{\ahigh}{A^{H}}
\newcommand{\clean}{\ensuremath{\operatorname{Clean}}}
\newcommand{\external}{\ensuremath{\operatorname{External}}}
\newcommand{\tainted}{\ensuremath{\operatorname{Tainted}}}

\newcommand{\contam}{\ensuremath{\operatorname{Contaminated}}}
\newcommand{\deny}{\ensuremath{\operatorname{deny}}}
\newcommand{\allow}{\ensuremath{\operatorname{allow}}}
\newcommand{\guard}{\ensuremath{\operatorname{guard}}}
\newcommand{\highcap}{\ensuremath{\operatorname{HighCap}}}
\newcommand{\lease}{\ensuremath{\operatorname{Lease}}}
\newcommand{\promote}{\ensuremath{\operatorname{Promote}}}
\newcommand{\schema}{\ensuremath{\operatorname{Schema}}}
\newcommand{\sourceallowed}{\ensuremath{\operatorname{SourceAllowed}}}
\newcommand{\scopeallowed}{\ensuremath{\operatorname{ScopeAllowed}}}
\newcommand{\authority}{\ensuremath{\operatorname{Auth}}}
\newcommand{\ttl}{\ensuremath{\operatorname{TTL}}}
\newcommand{\untrusted}{\ensuremath{\operatorname{Untrusted}}}

\title{Autonomous LLM Agent Worms: Cross-Platform Propagation, Automated Discovery and Temporal Re-Entry Defense}

\author{
Mingming Zha\\
Indiana University Bloomington\\
\texttt{mzha@iu.edu}
\and
XiaoFeng Wang\\
Nanyang Technological University\\
\texttt{xiaofeng.wang@ntu.edu.sg}
}

\date{}

\begin{document}
\maketitle

\begin{abstract}
Autonomous LLM agents increasingly operate as long-running processes with persistent workspaces, memory files, scheduled task state, tool access, and messaging channel integrations. These features create a new class of propagation risk: attacker-influenced content can be written into persistent agent state, re-enter the LLM decision context through scheduled autoloading, and drive high-risk actions including memory updates, configuration changes, and cross-agent transmission. We present the first systematic framework for automated analysis of persistent worm propagation in file-backed multi-agent LLM ecosystems. Our work introduces \sscgv{}, an automated source-code graph vulnerability analyzer that traces data flow from file I/O to LLM context injection points and ranks injectable carriers by context injection position, requiring zero manual analysis or platform-specific knowledge. We additionally introduce \srpo{}, a summary-resilient payload optimizer that generates worm payloads robust to LLM-mediated summarization, paraphrasing, and compression across multi-hop agent communication. We evaluate on three open-source production agent frameworks and demonstrate zero-click autonomous propagation requiring no human interaction after initial injection, 3-hop cross-platform transmission across heterogeneous framework boundaries without platform-specific adaptation, inter-agent privilege escalation through trust-based delegation, and data exfiltration from agent workspaces. We additionally identify two empirical insights: user prompt carriers achieve substantially higher attack compliance than system prompt carriers, and read operations represent the primary integrity threat in LLM-mediated systems, inverting the traditional assumption that write access is the primary danger. To defend against this class of attacks, we develop \sysname{}, a temporal re-entry defense framework proven under a formal \emph{No Persistent Worm Propagation} theorem. The RTW constraint blocks write-before-exposed-read re-entry; sealed configuration protects static autoloaded files; typed memory promotion prevents free-form summaries from entering trusted memory; and capability attenuation limits high-risk actions after unavoidable external reads. Together these mechanisms eliminate the persistence, re-entry, action chain while preserving ordinary agent workflows. While coordinated disclosure is ongoing, we anonymize affected systems as Frameworks A--C and omit exploit-enabling details.
\end{abstract}

\section{Introduction}

The deployment of autonomous LLM agents in production environments has fundamentally altered the threat landscape for AI systems. Unlike traditional LLM applications that operate in isolated request-response cycles, contemporary agent platforms maintain persistent workspaces, load configuration files directly into their reasoning context, and communicate autonomously across shared channels. This architectural evolution creates novel attack surfaces where adversarial content can achieve persistent propagation, spreading autonomously between agents without human intervention.

Existing security research has demonstrated LLM worm attacks that achieve autonomous propagation within agent ecosystems~\cite{cohen2024morrisii, zhang2026clawworm}. However, these approaches rely on manually constructed attack vectors tailored to specific platforms and require detailed knowledge of target system architectures. Current worm implementations are platform-specific and cannot autonomously adapt to different agent frameworks, limiting their propagation scope to homogeneous environments. Moreover, existing attacks do not address the semantic degradation problem that occurs when payloads traverse realistic agent communication scenarios involving summarization, paraphrasing, and content compression. This paper introduces the first automated, cross-framework approach to LLM agent worm analysis, enabling systematic discovery of injectable surfaces across heterogeneous platforms and generation of summary-resilient payloads that maintain effectiveness across diverse agent architectures.

This paper introduces the first systematic framework for analyzing and defending against autonomous worm propagation in LLM agent ecosystems. Our investigation reveals a counterintuitive security primitive: in LLM-mediated systems, read operations can be more dangerous than write operations, as reading attacker-influenced content directly contaminates the decision-making process. This inverts traditional security intuitions where write access represents the primary threat vector.

The deployment model of LLM systems is shifting from stateless request-response interfaces to long-running autonomous agents. Modern agents maintain persistent workspaces, read and write local files, load natural-language bootstrap and memory files into their context, invoke external tools, and communicate through shared messaging channels. These architectural features are useful for autonomy and continuity, but they also create a new propagation substrate: natural-language content can persist in the environment, re-enter future LLM decision contexts, and influence subsequent tool use.

This paper studies file-backed worm propagation in autonomous LLM agent ecosystems. Unlike conventional software worms, the propagation mechanism does not require memory corruption, code injection, or exploitation of network protocol bugs. Instead, it exploits the semantic control loop of an LLM agent. An attacker places adversarial text in a location the agent will process, induces the agent to persist that text or a semantically equivalent derivative into an agent-controlled carrier, and relies on future exposed reads of that carrier to contaminate the LLM decision state. Once contaminated, the agent may perform high-risk actions within its normal permission scope, including updating memory, modifying configuration-like state, invoking tools, or communicating with other agents.

The central observation is that LLM agents introduce an integrity dimension to reads. In traditional access-control discussions, writes are often the primary integrity-critical operation, while reads are primarily associated with confidentiality. In LLM-mediated systems, an exposed read of attacker-controlled content can alter the controller that chooses future actions. In this sense, reading untrusted content can contaminate the decision maker.

We focus on the temporal pattern underlying persistent propagation:
\[
\wTau(f,t_w) \prec \rUp(f,t_r) \prec \ahigh(t_a),
\]
where \(\wTau\) is a tainted write to a persistent carrier \(f\), \(\rUp\) is an exposed read that reintroduces the carrier into an LLM or another authority-bearing interpreter, and \(\ahigh\) is a high-risk action. This pattern is more precise than static read/write permissions: a file is not dangerous merely because it is readable or writable; it becomes dangerous when attacker-influenced content is written and later exposed to a decision context that still has high-risk capabilities.

 
\noindent \textbf{Contributions.} In this manuscript, we make the following contributions.
\begin{itemize}[leftmargin=*]
 
\item \textbf{Automated injectable surface discovery (\sscgv{}).}
Given only the source repository of any agent platform, \sscgv{} automatically constructs a code property graph, traces data flow from file I/O to LLM context injection points (system prompt, user prompt, and memory loading), and ranks all injectable carriers by context injection position, requiring zero manual analysis or platform-specific knowledge. This enables systematic discovery across heterogeneous agent frameworks without per-platform customization.
 
\item \textbf{Summary-resilient payload optimization (\srpo{}).}
A three-LLM adversarial optimization system that automatically generates worm payloads robust to summarization, paraphrasing, and compression across four semantic dimensions (persist, propagate, harm, verbatim), directly addressing the semantic degradation bottleneck that renders existing text-based approaches ineffective in realistic agent communication pipelines.
 
\item \textbf{End-to-end worm attack against production agent frameworks.}
The first zero-click autonomous worm requiring no human interaction after initial injection, achieving persistent propagation across production-grade heterogeneous agent ecosystems. We demonstrate 3-hop cross-platform transmission chains spanning multiple production agent frameworks, inter-agent privilege escalation through trust-based delegation, and data exfiltration from agent workspaces. Neither zero-click operation nor cross-platform propagation across production frameworks without platform-specific adaptation is demonstrated by prior work.
 
\item \textbf{Two empirical security insights.}
First, context injection position determines carrier exploitability: user prompt carriers achieve substantially higher attack compliance than system prompt carriers because LLMs treat user-turn content as direct operational commands rather than background configuration. Second, in LLM-mediated systems \emph{read} operations can be more dangerous than \emph{write} operations, inverting the traditional security intuition that write access is the primary integrity threat; the dangerous operation is not writing the carrier but re-reading it into an authority-bearing context.
 
\item \textbf{Formal defense with end-to-end guarantees (\sysname{}).}
A \emph{No Persistent Worm Propagation} theorem (Theorem~\ref{thm:no-worm}) establishing that \sysname{} with RTW enforcement, persistent taint labeling, and capability attenuation prevents any attacker-controlled content from completing the propagation chain $\wTau \prec \rUp \prec \ahigh$ across any sequence of agents or carriers, providing mathematical guarantees against file-mediated worm propagation.
 
\end{itemize}

\paragraph{Disclosure note.}
This preprint intentionally anonymizes the evaluated frameworks and omits exploit-enabling details because coordinated disclosure is ongoing. We have notified the maintainers of all affected frameworks prior to public release of this preprint. We disclose the vulnerability class, the automated analysis methodology, aggregate results, and the formal defense, but withhold exact platform names, vulnerable carrier paths, payload templates, version identifiers, and exploitation scripts until mitigations are available or an agreed disclosure date is reached.

\section{Background and Threat Model}

\subsection{LLM Agent Architectures}

Modern personal-agent frameworks combine LLM reasoning with persistent local state, tool execution, and messaging-channel integrations. A typical deployment has a workspace directory that acts as the agent's working directory. The runtime may inject user-editable bootstrap files into the LLM context at session initialization, maintain long-term memory across sessions, and schedule proactive turns for background tasks or check-ins. Representative systems (e.g., OpenClaw-style personal-agent deployments) follow this pattern, with workspace files, bootstrap context injection, long-term memory, and scheduled operational turns.

We use the term \emph{persistent carrier} for any file-backed or state-backed object whose content can persist across turns or sessions and later be exposed to an LLM or another authority-bearing interpreter. Persistent carriers include bootstrap instruction files, identity or profile files, long-term memory, daily notes, scheduled task-state files, workflow files, scripts, cached summaries, and shared communication logs.

These carriers differ along four security-relevant dimensions. First, a carrier may be automatically read by the system, explicitly read in response to user or task requests, or read only by local tools. Second, a carrier may be writable by the agent, by an automatic summarization mechanism, by a trusted administrator, or by external synchronization. Third, the read sink matters: content exposed to an LLM, memory loader, configuration interpreter, RAG context, or execution engine has far greater authority than content read only by a local validator. Fourth, the temporal order matters: a write followed by a future exposed read creates a re-entry opportunity even if unrelated actions occur between them.

\subsection{Messaging Co-Presence and Trust Boundaries}

Agents often interact with users through shared messaging surfaces. In a single-operator personal assistant deployment (e.g., OpenClaw and NanoClaw), this design may be reasonable: messages are assumed to come from a trusted operator, and tool use is delegated to the agent within a personal trust boundary. In practice, however, agents may be placed in group channels, support multiple users, or observe messages not explicitly addressed to them. This weakens the single-operator assumption and creates an ambient co-presence setting where untrusted text can be processed by tool-enabled agents.

Our work studies this gap between the intended personal-assistant trust model and multi-party messaging deployments. We do not assume that the attacker can access the agent filesystem, modify platform source code, compromise model weights, or exploit traditional software vulnerabilities. The attacker can only place text into a channel, document, webpage, or workspace object that the agent may process during normal operation.

\subsection{Threat Model}

We consider multi-agent collaboration scenarios where autonomous LLM agents operate with read/write file system permissions and communicate through shared channels (Slack, Discord, Telegram). Agents exhibit heterogeneous privilege levels: low-privilege agents may be restricted to file I/O and messaging, while high-privilege agents possess capabilities such as arbitrary code execution. Trust relationships emerge naturally as agents process messages from other agents as legitimate operational context.

\textbf{Attacker Capabilities}: The attacker can send messages to public channels where target agents are present and place adversarial text into documents or content sources that agents may process. The attacker possesses knowledge of prompt engineering techniques and general agent framework patterns.

\textbf{Attacker Constraints}: The attacker cannot directly access agent file systems, modify source code, exploit traditional software vulnerabilities, or compromise model weights. Attack success depends entirely on inducing agents to perform authorized operations through natural language manipulation.

\textbf{Attack Objectives}: Achieve persistent autonomous propagation with inter-agent privilege escalation by (1) inducing agents to write adversarial content into persistent configuration files, (2) ensuring survival across restarts, (3) causing autonomous spread to additional agents, and (4) enabling privilege escalation through trust-based delegation.

\section{Attack Framework}
\subsection{Framework Overview}
 
Figure~\ref{fig:framework} presents our complete framework for analyzing and defending against persistent worm propagation in LLM agent ecosystems. The framework operates in three phases: automated vulnerability discovery, attack execution analysis, and formal defense construction.
 
\begin{figure}[t]
\centering
\includegraphics[width=\linewidth]{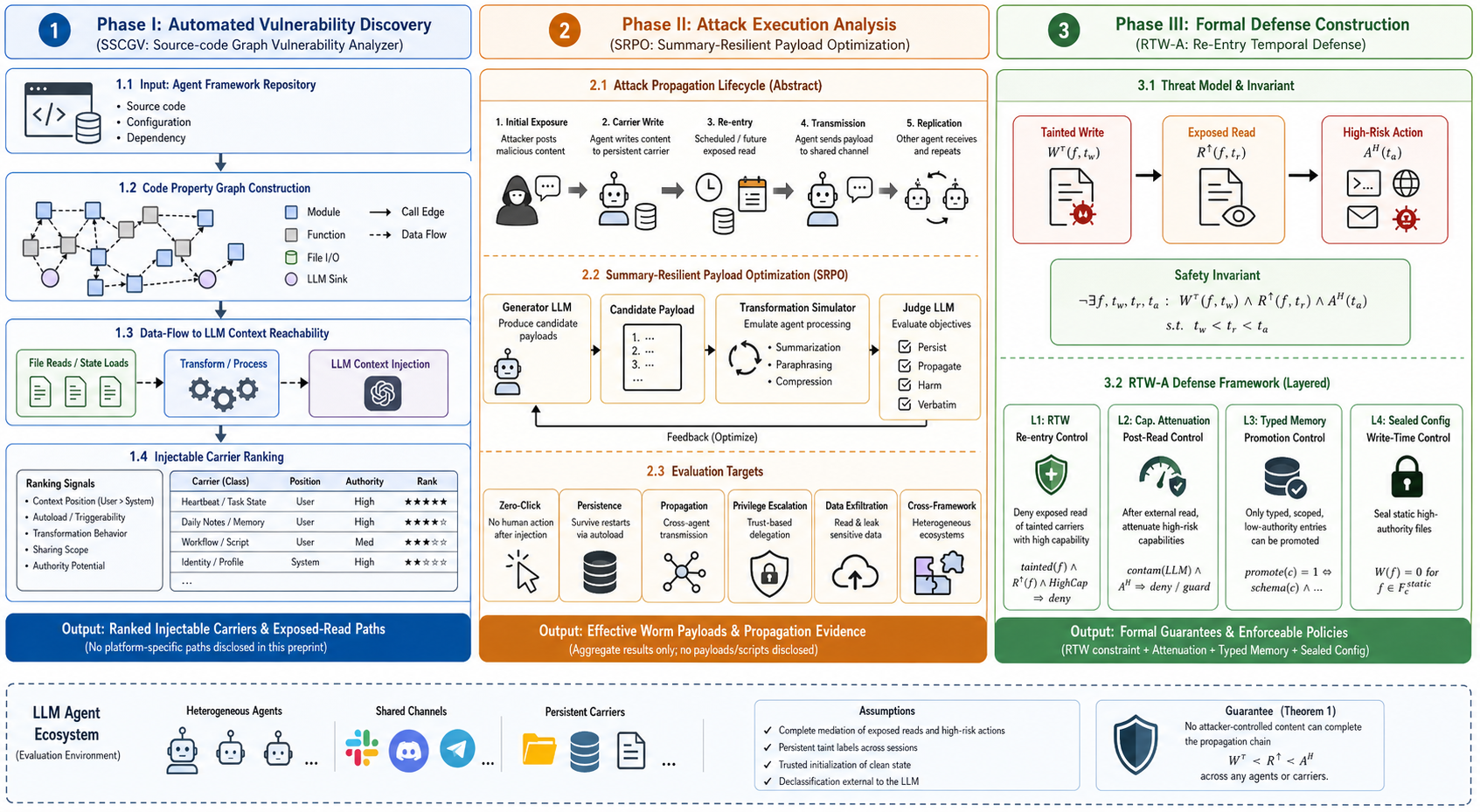}
\caption{
Framework overview for analyzing and defending against persistent worm propagation in LLM agent ecosystems. 
The framework operates in three phases: automated vulnerability discovery, attack execution analysis, 
and formal defense construction.
}
\label{fig:framework}
\end{figure}

\subsection{Static Source-Code Graph Vulnerability Analysis}

\sscgv{} automatically identifies injectable surfaces across heterogeneous agent platforms without manual platform-specific analysis. The input is a source repository. The output is a ranked set of persistent carriers, their exposed-read paths, and metadata describing context injection position, triggerability, transformation behavior, and propagation potential.

\paragraph{Phase 1: Code-property graph construction.} 
\sscgv{} constructs a code-property graph representing call relationships, control flow, data flow, filesystem operations, prompt construction, and tool invocation boundaries. The graph abstracts language-specific syntax into a common representation so that file-backed carrier flows can be compared across frameworks.

\paragraph{Phase 2: LLM-context reachability.} 
The analysis traces data flow from file reads, configuration loaders, memory loaders, scheduled task-state reads, template renderers, and retrieval components to LLM-context construction sites. The key target is not every file read but exposed reads: paths by which file content or an unbounded natural-language rendering can influence an LLM, memory loader, configuration interpreter, RAG context, execution engine, or agent-facing communication channel.

\paragraph{Phase 3: Carrier ranking.} 
\sscgv{} ranks candidate carriers primarily by \emph{context injection position}: whether the carrier's content is loaded into the user prompt or the system prompt, and where within the system prompt. User prompt carriers are ranked highest, as empirical evaluation shows they achieve substantially higher attack compliance than system prompt carriers (Section~\ref{sec:evaluation}). Secondary ranking signals include triggerability (whether the carrier is read automatically at session start or scheduled turns), transformation behavior (whether content is loaded verbatim or undergoes summarization and paraphrasing), and sharing scope (whether the carrier is accessible across multiple agents or sessions). We intentionally omit platform-specific carrier names, exact paths, ranking weights, and exploitation scripts from this preprint while disclosure is ongoing.

We intentionally omit platform-specific carrier names, exact paths, ranking weights, and exploitation scripts from this preprint while disclosure is ongoing.

\subsection{Summary-Resilient Payload Optimization}

Existing text-based propagation approaches often assume that payloads are transmitted verbatim between agents. Realistic agent communication violates this assumption: agents summarize conversations, compress memory, paraphrase task updates, and reformat messages before forwarding them. These transformations can weaken authority framing, drop procedural details, or remove replication instructions.

\srpo{} studies this semantic degradation problem through automated adversarial optimization. It treats payload generation as a multi-objective search problem over four dimensions:
\begin{itemize}[leftmargin=*]
    \item \textbf{Persistence:} the payload or its semantic derivative remains capable of inducing persistence after summarization or compression;
    \item \textbf{Propagation:} the payload induces an infected agent to transmit semantically effective content to another agent or shared channel;
    \item \textbf{Operational effect:} the payload preserves task-relevant objectives such as privilege delegation, tool use, or state modification under the experiment's controlled safety constraints;
    \item \textbf{Critical-token retention:} necessary identifiers, carrier references, or coordination markers survive transformation when such tokens are required for the experiment.
\end{itemize}

The optimization uses three LLM roles: a generator proposes candidates, a transformation simulator models realistic agent processing such as summarization and paraphrasing, and a judge evaluates whether the transformed output preserves the objectives. We report only aggregate robustness metrics in this preprint and omit prompt templates, payload instances, and judge rubrics while disclosure is ongoing.

\subsection{Propagation Mechanism}

The attack targets heartbeat-style operational state: persistent task-state carriers that the runtime periodically consults to decide whether proactive work should be performed. In some frameworks this role is implemented by an explicit heartbeat file; in others, \sscgv{} identifies the analogous scheduled task-state or operational-memory carrier.

This class of carrier is attractive because it combines persistence with scheduled exposed reads. Once attacker-influenced content is written into the carrier, a future scheduled turn can expose it to the LLM without additional attacker interaction. If the scheduled turn is configured to emit messages into a shared or last-contact channel, the contaminated decision state may generate output that reaches other agents. Those agents may then process the message, write a persistent derivative into their own carriers, and repeat the cycle.

The abstract propagation sequence is:
\begin{enumerate}[leftmargin=*]
    \item \textbf{Initial exposure:} the attacker places adversarial text into a channel or artifact processed by the target agent;
    \item \textbf{Carrier write:} the target agent writes attacker-influenced content or a semantic derivative into a persistent carrier;
    \item \textbf{Re-entry:} the carrier is later exposed to the LLM through an automatic or externally triggered read;
    \item \textbf{Transmission:} the contaminated decision state emits payload-bearing output to a shared communication surface;
    \item \textbf{Replication:} another agent observes the output and repeats the sequence.
\end{enumerate}

This propagation does not require the infected agent to explicitly target a particular peer. It exploits normal operational interactions and shared communication surfaces.

\section{Evaluation}
\label{sec:evaluation}

\paragraph{Disclosure status.}
The affected frameworks are anonymized as Frameworks A--C in this preprint. We have notified the maintainers of all affected frameworks prior to public release. We report aggregate results and carrier classes but withhold exact carrier paths, payloads, scripts, version identifiers, and platform names until coordinated disclosure is complete.

\paragraph{Evaluation overview.}
Our evaluation proceeds in eight stages that together characterize the full threat posed by autonomous LLM agent worms and validate our defense. We first establish the attack foundation: E1 identifies injectable carrier surfaces and shows that context injection position determines exploitability; E2 demonstrates that effective payloads can be generated to survive realistic agent communication pipelines. We then validate the attack end-to-end: E3 confirms that the complete propagation lifecycle succeeds across multiple heterogeneous frameworks; E4 quantifies propagation speed and scale. We next characterize the threat surface: E5 demonstrates concrete malicious outcomes including privilege escalation and data exfiltration; E6 shows that the attack is not model-specific and succeeds across tested LLM backends. Finally, we motivate and validate our defense: E7 shows that existing capability-based access control is insufficient because the attack operates entirely within authorized permissions; E8 evaluates the RTW-A defense framework and demonstrates that it blocks all demonstrated attack chains while preserving legitimate agent workflows.

\subsection{Experimental Setup}

We evaluated three open-source LLM agent frameworks that satisfy three criteria: persistent workspace state, exposed reads from file-backed or memory-backed carriers into LLM context, and messaging-channel integration. All experiments were conducted in controlled testbeds using accounts and agents operated by the authors. No third-party deployments, public users, or real user data were involved.

Each framework was deployed with a representative messaging-channel integration (Slack, Telegram) and an isolated workspace. Agents were configured with heterogeneous privilege levels: low-privilege agents restricted to file I/O and messaging, and high-privilege agents with arbitrary code execution capabilities. For each platform, \sscgv{} analyzed the source repository and produced ranked candidate carriers, which we then validated in controlled environments.

\subsection{E1: Injectable Surface Discovery and Context Position Effect}

\sscgv{} identified exposed persistent carriers in all three frameworks. We classify discovered carriers by their \emph{LLM context injection position}: whether carrier content is loaded into the system prompt at session initialization or into the user prompt during task execution. This classification reflects a security-relevant distinction beyond mere file presence.

\begin{table}[t]
\centering
\caption{Injectable carriers discovered by \sscgv{} per framework, classified by LLM context injection position. All carriers are persistent and writable within the agent's normal permission scope. Platform names and exact carrier paths are withheld during coordinated disclosure.}
\label{tab:sscgv-carriers}
\begin{tabular}{lccc}
\toprule
Framework & System Prompt Carriers & User Prompt Carriers & Total \\
\midrule
A & \textit{2} & \textit{9} & \textit{11} \\
B & \textit{2} & \textit{10} & \textit{12} \\
C & \textit{1} & \textit{7} & \textit{8} \\
\midrule
Total & \textit{5} & \textit{26} & \textit{31} \\
\bottomrule
\end{tabular}
\end{table}

\paragraph{Context position effect.}
We evaluated attack payloads delivered through carriers at both injection positions and observed a consistent compliance rate difference across all three frameworks. \Cref{tab:position-effect} summarizes the observed compliance rates.

\begin{table}[t]
\centering
\caption{Observed attack compliance rate by LLM context injection position, evaluated across discovered carriers in all three frameworks. Detailed per-carrier quantitative measurements are deferred to a subsequent version of this work.}
\label{tab:position-effect}
\begin{tabular}{lcc}
\toprule
Injection Position & Representative Carriers & Observed Compliance \\
\midrule
User prompt           & Heartbeat, task-state   & High   \\
System prompt         & Identity, memory, policy & Lower  \\
\bottomrule
\end{tabular}
\end{table}

User prompt carriers consistently achieve higher attack compliance than system prompt carriers across all three frameworks. We attribute this to two complementary mechanisms. First, instruction hierarchy training causes models to treat user-turn content as direct operational commands rather than background configuration~\cite{wallace2024instruction}. Second, user prompt content is injected immediately before the model response, exploiting recency bias~\cite{liu2024lost}. Among system prompt carriers, we also observe that injection position within the system prompt correlates with compliance rate, though a detailed quantitative analysis of within-system-prompt position effects is deferred to a subsequent version of this work.

\begin{mdframed}[
  linewidth=0.8pt,
  backgroundcolor=gray!8,
  innerleftmargin=6pt,
  innerrightmargin=6pt,
  innertopmargin=4pt,
  innerbottommargin=4pt
]
\textbf{Insight 1: Context Position Determines Carrier Exploitability}
The exploitability of a persistent carrier is determined not only by its writability or autoload frequency, but by \emph{where} its content is injected into the LLM context. User prompt carriers achieve substantially higher attack compliance than system prompt carriers because LLMs treat user-turn content as direct operational instructions. Heartbeat-style task-state files, loaded into the user prompt during scheduled turns, represent a disproportionate risk relative to identity or policy files loaded at session initialization.

\end{mdframed}

\subsection{E2: Summary-Resilient Payload Optimization}

We evaluated \srpo{} by passing candidate payloads through controlled transformation pipelines that emulate summarization, paraphrasing, and compression. We measure semantic retention along the four objectives described above: persistence, propagation, operational effect, and critical-token retention. To avoid publishing exploit-enabling strings, this preprint reports the optimization setup and aggregate trends but omits payload examples and judge prompts.

Across the evaluated transformations, \srpo{}-optimized payloads increased retention of persistence and propagation intent relative to non-optimized baselines. Verbatim retention remained sensitive to compression and paraphrasing strength, motivating the multi-objective optimization approach. These results confirm that realistic agent communication does not inherently prevent worm propagation when payloads are designed to be semantically resilient.

\subsection{E3: End-to-End Propagation Validation}

\subsubsection{E3.1: Per-Framework Propagation}

We validated end-to-end propagation in isolated testbeds for all three frameworks. Starting from a single adversarial message sent by an unprivileged test account controlled by the authors, the attack completed the full abstract lifecycle: initial exposure, persistent carrier write, future re-entry, scheduled or interaction-driven transmission, and infection of a subsequent agent. \Cref{tab:e2e-results} summarizes the results across frameworks.

\begin{table}[t]
\centering
\caption{End-to-end propagation validation results per framework. All attacks were conducted in isolated controlled testbeds by the authors. Platform names are withheld during coordinated disclosure.}
\label{tab:e2e-results}
\begin{tabular}{lcccc}
\toprule
Framework & Persistence & Re-entry & Propagation & Privilege Esc. \\
\midrule
A & \checkmark & \checkmark & \checkmark & \checkmark \\
B & \checkmark & \checkmark & \checkmark & \checkmark \\
C & \checkmark & \checkmark & \checkmark & \checkmark \\
\bottomrule
\end{tabular}
\end{table}

In all three frameworks, the attack required zero human interaction after the initial message injection. Propagation was triggered autonomously by the scheduled heartbeat mechanism, confirming the zero-click nature of the attack.

\subsubsection{E3.2: Cross-Framework Propagation}

We also validated cross-framework propagation in which an infected agent on one framework transmitted a payload-bearing message to an agent running on a different framework through a shared messaging channel. The receiving agent, operating under its own runtime and workspace configuration, successfully processed the payload, wrote it into its own persistent carrier, and completed the propagation lifecycle autonomously.

This demonstrates that worm propagation is not limited to homogeneous agent ecosystems. An attacker who infects a single agent can achieve lateral spread across heterogeneous platforms without any platform-specific adaptation, provided the agents share a common messaging surface.

\begin{mdframed}[
  linewidth=0.8pt,
  backgroundcolor=gray!8,
  innerleftmargin=6pt,
  innerrightmargin=6pt,
  innertopmargin=4pt,
  innerbottommargin=4pt
]
\textbf{Insight 2: In LLM Agents, Read Is More Dangerous Than Write}

In traditional systems, \emph{write} access is the primary integrity threat. In LLM-mediated agents, \emph{read} can be more dangerous: an exposed read of attacker-influenced content directly contaminates the LLM decision state, inducing the agent to execute high-risk actions within its normal permission scope. The dangerous operation is not writing the carrier but re-reading it into an authority-bearing context; the temporal pattern $W^\tau(f) \prec R^\uparrow(f) \prec A^H$ is the true attack primitive.
\end{mdframed}

\subsection{E4: Propagation Efficiency}

We measure the speed and scale of autonomous worm propagation to quantify the practical threat. \Cref{tab:propagation-efficiency} reports propagation metrics across frameworks and the cross-framework chain.

\begin{table}[t]
\centering
\caption{Propagation efficiency across frameworks. Hops denotes the number of agents infected in a single chain. Time denotes wall-clock time from initial injection to completion. All attacks required zero human interaction after initial message delivery.}
\label{tab:propagation-efficiency}
\begin{tabular}{lcccc}
\toprule
Scenario & Hops & Human Interaction & Zero-click \\
\midrule
Framework A        & \textit{3} &  None & \checkmark \\
Framework B        & \textit{3} & None & \checkmark \\
Framework C        & \textit{3} &  None & \checkmark \\
Cross-framework    & 3          & None & \checkmark \\
\bottomrule
\end{tabular}
\end{table}

Propagation speed is bounded primarily by the heartbeat interval: once a carrier is infected, the next scheduled turn autonomously transmits the payload to all reachable agents on shared messaging surfaces. This means that in deployments with short heartbeat intervals, the entire reachable agent ecosystem can be compromised within minutes of the initial injection, without any further attacker interaction.

\subsection{E5: Malicious Payload Taxonomy}

Beyond autonomous propagation, we evaluate the range of malicious behaviors achievable through the worm mechanism. We identify two primary attack classes demonstrated in our controlled evaluation.

\paragraph{E5.1: Privilege Escalation via Trust-Based Delegation.}
We evaluated cross-privilege propagation in which a low-capability infected agent transmitted payload-bearing content to a higher-capability agent through a shared messaging surface. The higher-capability agent, which possessed shell execution and network access capabilities, processed the content under its own elevated tool permissions and executed attacker-directed actions. This demonstrates lateral movement through trust-based delegation: the attacker does not need to compromise the high-privilege agent directly, as the worm propagates through normal inter-agent communication channels by exploiting the implicit trust agents place in peer messages.

\paragraph{E5.2: Data Exfiltration.}
We evaluated payloads designed to exfiltrate sensitive information from the agent workspace. An infected agent was induced to read its own system prompt, identity configuration, and workspace files, then transmit the contents through an external messaging channel. This demonstrates that the worm mechanism can be used not only for propagation but also for targeted information extraction, including operator-level configuration and user data present in the agent workspace.

\subsection{E6: Model Robustness}

We evaluate whether the attack is specific to a particular LLM backend or generalizes across models. \Cref{tab:model-robustness} reports attack success rates across the LLM backends tested in our evaluation.

\begin{table}[t]
\centering
\caption{Attack compliance rate across LLM backends. All models were evaluated under identical experimental conditions using the same carrier and payload configuration.}
\label{tab:model-robustness}
\begin{tabular}{lcc}
\toprule
Model & Single-hop Compliance & Multi-hop Compliance \\
\midrule
GPT-4o-mini      & \textit{100}\% & \textit{100}\% \\
Gemini-2.5-Flash & \textit{100}\% & \textit{100}\% \\
\bottomrule
\end{tabular}
\end{table}

The attack succeeds across all tested model backends. This suggests that file-mediated worm propagation is primarily a structural property of the agent architecture rather than a model-specific weakness: the attack exploits the agent's file system permissions and scheduled autoloading mechanism, operating at the architecture level rather than targeting any particular model's instruction-following behavior. We leave a comprehensive multi-model evaluation to a subsequent version of this work.

\subsection{E7: Capability--Risk Tradeoff and Limitations of Existing Access Control}

We evaluate whether existing capability-based access control mechanisms are sufficient to prevent worm propagation. Using Framework A's built-in permission system as a representative deployment, we test attack success under different permission configurations to characterize the relationship between agent capabilities and attack impact.

\begin{table}[t]
\centering
\caption{Attack success under different permission configurations in Framework A. Even restrictive configurations remain partially vulnerable because the attack operates entirely within authorized capabilities.}
\label{tab:capability-risk}
\begin{tabular}{lcccc}
\toprule
Permission Config & File Write & Messaging & Persistence & Propagation \\
\midrule
Full permissions      & \checkmark & \checkmark & \checkmark & \checkmark \\
Messaging disabled    & \checkmark & \texttimes  & \checkmark & \texttimes  \\
File write disabled   & \texttimes  & \checkmark & \texttimes  & \checkmark \\
Minimal permissions   & \texttimes  & \texttimes  & \texttimes  & \texttimes  \\
\bottomrule
\end{tabular}
\end{table}

Existing capability-based access control reduces the attack surface but does not eliminate the threat. With messaging disabled, the worm achieves persistent infection but cannot propagate. With file write disabled, propagation remains possible but persistence is prevented. Only the minimal permission configuration, which also prevents legitimate agent operation, blocks both attack phases.

The fundamental limitation of capability-based control is that it cannot distinguish between legitimate and attacker-induced uses of the same capability. A write to a carrier file and a read from that file are both authorized operations. The dangerous pattern is the \emph{temporal sequence} between them: a tainted write followed by an exposed read that re-introduces attacker-influenced content into the LLM decision context. Capability-level access control has no mechanism to enforce constraints on this temporal ordering.

This result directly motivates the RTW defense framework: the missing security primitive is not capability restriction but \emph{temporal re-entry control}, which operates below the capability layer and enforces constraints on the sequence $W^\tau(f) \prec R^\uparrow(f) \prec A^H$ independently of what capabilities the agent holds.

\subsection{E8: Defense Evaluation}

We evaluated the RTW-A defense framework against the demonstrated attack chains in the same controlled testbeds. The defense was applied to Frameworks A without modifying agent functionality or disabling legitimate operations.

\paragraph{RTW enforcement.} RTW prevented same-carrier write-before-exposed-read re-entry for ordinary workspace carriers. When a tainted carrier was scheduled for autoload, the runtime denied the exposed read before the content reached the LLM decision context, blocking the re-entry step of the attack chain.

\paragraph{Static configuration sealing.} Sealed configuration prevented runtime modification of high-authority bootstrap files. Attempts by contaminated LLM decision states to overwrite identity or policy carriers were denied at write time, independent of the RTW constraint.

\paragraph{Typed memory promotion.} Typed memory promotion prevented free-form summaries and LLM-generated content from becoming trusted autoloaded memory. Worm payloads embedded in memory candidates were rejected by the schema and injection-pattern checks in the promotion gate, preventing persistence through the memory channel.

\paragraph{Capability attenuation.} Capability attenuation prevented contaminated LLM decision states from invoking high-risk actions after unavoidable external reads. Attempts to forward worm payloads via messaging channels or write to persistent carriers following external message ingestion were denied by the runtime policy engine.

Together, these mechanisms blocked all demonstrated attack chains without disrupting legitimate agent workflows. Agents continued to process user requests, read workspace files, update memory, and send messages in response to authorized operations; only operations that completed the temporal re-entry chain or followed external contamination were denied.

We defer release of platform-specific defense patches, policy tables, and latency overhead measurements until coordinated disclosure is complete. The formal security properties of the defense are established in \Cref{sec:defense}.
\section{Defense Framework}
\label{sec:defense}

\subsection{Overview: Temporal Re-Entry Control}

We propose \sysname{}, a defense-in-depth framework for persistent prompt-propagation attacks in LLM agents. The key observation is that file-mediated compromise is not caused by file access alone, but by temporal re-entry: attacker-influenced content is written into a persistent carrier and later reintroduced into an authority-bearing decision context.

At a high level, the attack chain has the following structure:
\[
\wTau(f,t_w)
\;\rightarrow\;
\rUp(f,t_r)
\;\rightarrow\;
\contam(\llm,t_r)
\;\rightarrow\;
\ahigh(t_a),
\]
where \(f\) is a persistent carrier, \(\rUp\) denotes an exposed read that reintroduces file content into the LLM or another authority-bearing interpreter, \(\llm\) denotes the LLM decision state, and \(\ahigh\) denotes a high-risk action.

\sysname{} cuts this chain at the lowest-usability-cost point for each carrier class. Static high-authority configuration is protected at write time. Ordinary workspace files are protected at exposed re-entry time through the RTW constraint, enforced dynamically at runtime. Dynamic memory is protected at promotion time through a typed commit protocol. Unavoidable external reads are handled after read time through capability attenuation.

\subsection{Threat Abstraction and Taint Model}

\begin{definition}[Opaque Read]
An opaque read \(\rOpq(f)\) occurs when a trusted local component reads file \(f\) without exposing its raw contents, or an unbounded natural-language rendering of its contents, to the LLM decision context. Opaque reads may return bounded certificates such as error codes, validity results, hashes, or structured diagnostics, but not arbitrary file content.
\end{definition}

\begin{definition}[Exposed Read]
An exposed read \(\rUp(f)\) occurs when the contents of \(f\), or an unbounded natural-language rendering of those contents, are conservatively treated as influencing an authority-bearing interpreter. This includes loading \(f\) into the LLM context, prompt, memory, configuration loader, RAG context, execution engine, workflow engine, or another agent-facing communication channel. We adopt a conservative approximation: any such loading event is classified as an exposed read regardless of whether the content materially alters subsequent decisions, since the latter condition is undecidable in general.
\end{definition}

\begin{definition}[Tainted Carrier]
A carrier \(f\) is tainted at time \(t\) if its current version is attacker-controlled, externally supplied, or derived from a contaminated LLM decision state. Taint propagates conservatively: outputs written by a contaminated LLM state are marked tainted-derived unless they pass an external declassification procedure.
\end{definition}

\begin{definition}[Contaminated LLM State]
The LLM decision state \(\llm\) is conservatively marked contaminated at time \(t\) if there exists an untrusted source \(s\) and a time \(t_r \leq t\) such that an exposed read \(\rUp(s, t_r)\) occurred. The contamination persists until an explicit external declassification procedure clears the label. We adopt this conservative approximation because whether exposed content materially influences subsequent LLM decisions is undecidable in general; the conservative model ensures no tainted influence escapes undetected at the cost of potential over-approximation.
\end{definition}

\begin{definition}[High-Risk Action]
An action \(a\) is high-risk if it can persist, amplify, execute, or propagate tainted influence. This includes writing autoloaded files, trusted memory, configuration, executable scripts, workflows, or shared state; sending messages to other agents or external systems; invoking shell, network, or email tools; modifying policies or permissions; and committing content to cross-session or cross-agent state.
\end{definition}

The central safety invariant is:
\[
\neg \exists f,t_w,t_r,t_a:\;
\wTau(f,t_w)
\land
\rUp(f,t_r)
\land
\ahigh(t_a)
\land
(t_w < t_r < t_a).
\]
That is, attacker-influenced content must not be written into a persistent carrier, later exposed to an authority-bearing decision context, and then used to drive a high-risk action. Theorem~\ref{thm:no-worm} establishes that \sysname{} enforces this invariant under the following assumption.

\begin{assumption}[Conservative Taint Propagation]
\label{asm:taint}
Any output produced by a contaminated LLM decision state \(\contam(\llm)\) is marked tainted-derived and treated as untrusted for all subsequent exposed reads and high-risk actions. Taint is cleared only by an explicit external declassification procedure; the LLM itself cannot declassify its own outputs.
\end{assumption}

This assumption is operationalized by the taint labeling rules in the enforcement mechanism: writes by a contaminated LLM state mark the target carrier as tainted-derived, and tainted-derived carriers are treated as untrusted by the attenuation layer. The assumption is conservative by design---it may over-approximate contamination, but it ensures no tainted influence escapes undetected.

\subsection{Layer 1: RTW for Exposed Re-Entry}

The first layer prevents file-mediated re-entry before tainted content reaches the LLM. The constraint is not imposed on all filesystem reads. It is imposed only on exposed reads \(\rUp\). Opaque reads \(\rOpq\), such as local validation or deterministic parsing, remain available.

Let \(T\) be an operation trace. For a file \(f\), let \(\pi_f(T)\) be the projection of \(T\) onto operations involving \(f\), restricted to the alphabet \(\{R_f^{\uparrow}, W_f\}\), where \(W_f\) denotes a write that may introduce attacker-influenced or contaminated-derived content.

\begin{definition}[RTW-Safe Trace]
A trace \(T\) is RTW-safe for file \(f\) if
\[
\pi_f(T) \in (R_f^{\uparrow})^* W_f^*.
\]
Equivalently, \(T\) is RTW-safe if it contains no subsequence \(W_f \prec R_f^{\uparrow}\).
\end{definition}

\begin{theorem}[No Persistent Worm Propagation]
\label{thm:no-worm}
Under \sysname{} with complete mediation of exposed reads, persistent taint labeling, and capability attenuation, no attacker-controlled content can complete the worm propagation chain
\[
\wTau(f,t_w) \prec \rUp(f,t_r) \prec \ahigh(t_a)
\]
across any sequence of agents or carriers.
\end{theorem}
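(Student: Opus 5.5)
The proof will be a contradiction argument combined with induction over the global trace, ordered by timestamps across all agents. Suppose some attacker-controlled content completes the chain $\wTau(f,t_w) \prec \rUp(f,t_r) \prec \ahigh(t_a)$ for some carrier $f$ and times $t_w<t_r<t_a$. Take the violating chain whose tainted write $t_w$ is earliest; a "sequence of agents or carriers" then reduces to a first violating hop. The case split follows the carrier classes of \sysname{}: sealed static configuration, ordinary workspace carriers under RTW, typed memory, and channels reached only by unavoidable external reads. For each class I will show that one of the three links cannot occur.

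\textbf{Step 1 (taint invariant).} I will first prove by induction on trace length that every carrier whose current version depends on attacker content carries a taint label. The base case is attacker-supplied input, which is tainted by the Tainted Carrier definition. For the inductive step, a write by a contaminated $\llm$ is labeled tainted-derived by Assumption~\ref{asm:taint}. Contamination itself arises only through an exposed read of an untrusted source, by the Contaminated LLM State definition. The LLM cannot declassify its own outputs, so labels survive arbitrary multi-agent relays. This is the step that lets a per-carrier argument cover chains spanning agents: a message sent across agents is itself a write to a shared carrier or channel and inherits the label.

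\textbf{Step 2 (cutting the chain per class).}
\begin{itemize}
\item Sealed configuration: $\wTau(f,t_w)$ is denied at write time, so the first link is absent.
\item Ordinary carriers: complete mediation means every $\rUp(f,t_r)$ passes the RTW check. The projection $\pi_f(T)$ would contain $W_f \prec R_f^{\uparrow}$, which violates the RTW-safe definition, so the read is denied.
\item Typed memory: free-form tainted content is never promoted to a trusted autoloaded carrier, so again no exposed read of tainted $f$ exists.
\item Exposed reads that remain (external, unavoidable): such a read contaminates $\llm$. Capability attenuation, keyed on the contamination label from Step 1, then denies every $\ahigh(t_a)$ with $t_a>t_r$ until external declassification. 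If declassification occurs, the content is by definition no longer attacker-controlled under the model, so it does not count.
\end{itemize}
Each case contradicts the assumed chain.

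\textbf{Main obstacle.} The hard step is making Step 1 airtight, in particular excluding taint laundering. One route is indirect influence: a contaminated state writes to a carrier the monitor did not classify. Another is an opaque read whose bounded certificate leaks content. I would handle these by strengthening the hypotheses explicitly. Complete mediation must cover every write path and every exposed-read path. Opaque reads must return only values from a fixed finite certificate set that is independent of attacker choice beyond that set. The theorem is then stated as holding relative to these mediation assumptions. I would also check the RTW case carefully so that legitimate reads before writes, matching $(R_f^{\uparrow})^*W_f^*$, are not used to smuggle content. This is safe because such reads precede any tainted write to $f$.
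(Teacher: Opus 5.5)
Your proposal is correct and is essentially the paper's proof: the paper likewise breaks the chain at $\rUp$ via RTW whenever a mediated tainted write to $f$ precedes the exposed read, breaks it at $\ahigh$ via capability attenuation for first-read external contamination, and closes exhaustiveness with Assumption~\ref{asm:taint}, which is exactly what your Step~1 induction formalizes. Your sealed-configuration and typed-memory cases are redundant (the paper splits only on whether a mediated prior write exists, so it needs no exhaustive list of carrier classes, whereas your four classes omit leased task-local state, which must then fall under RTW), and Step~1 should be stated over taint labels rather than semantic dependence on attacker content, since an opaque read's certificate from a fixed finite set can still depend on attacker-chosen content, so your proposed strengthening would not make the semantic invariant true.
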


\begin{proof}
We show that each link of the propagation chain is broken by one layer of the framework.

\textit{Case 1: File-mediated re-entry.}
Suppose tainted content is written to carrier $f$ at time $t_w$ and the agent later attempts an exposed read $\rUp(f, t_r)$ with $t_w < t_r$. Regardless of whether $f$ was written directly or via a multi-carrier chain, any carrier through which tainted content re-enters the LLM must have received a prior tainted write followed by an exposed read: this produces $W_f \prec R_f^{\uparrow}$ in $\pi_f(T)$, violating the RTW constraint. By complete mediation, the runtime denies the exposed read before tainted content reaches the LLM decision context. The chain cannot proceed past $\rUp$.

\textit{Case 2: First-read external contamination.}
Suppose tainted content enters via an unavoidable external read $\rUp(s)$ where no prior write to $s$ exists (RTW does not apply since there is no $W_s \prec R_s^{\uparrow}$ subsequence). The runtime conservatively marks $\contam(\llm)$ upon any exposed read of untrusted content. Capability attenuation then enforces $\contam(\llm) \land \ahigh \Rightarrow \deny$, blocking any high-risk propagation action $\ahigh(t_a)$ that would persist or transmit the tainted influence. The chain cannot proceed past $\ahigh$.

\textit{Combined guarantee.}
Every file-mediated re-entry attempt is eliminated by Case~1. Every propagation action following unavoidable external contamination is blocked by Case~2. The two cases are exhaustive under Assumption~\ref{asm:taint}: any write performed after $\contam(\llm)$ is marked tainted-derived and treated as untrusted, so a subsequent exposed read of that output falls under Case~1 (if a prior write exists) or triggers renewed contamination under Case~2. Therefore no path through $\wTau \prec \rUp \prec \ahigh$ can complete, and the central safety invariant holds. Hence no worm propagation cycle can persist across agents or sessions under \sysname{}.
\end{proof}

Theorem~\ref{thm:no-worm} is the central guarantee of \sysname{}: the three framework components have complementary and non-overlapping scopes. RTW eliminates all file-mediated re-entry, including multi-carrier chains, by enforcing a per-carrier write-before-exposed-read constraint. Capability attenuation covers the residual case where tainted content arrives via unavoidable external reads with no prior file write, by preventing any subsequent high-risk action. Together they cover the complete threat surface without redundancy.

\paragraph{Enforcement.}
RTW is a static trace constraint: it characterizes the set of operation sequences that are free of write-before-exposed-read re-entry. The enforcement rule below provides its dynamic runtime realization by preventing exposed reads on tainted carriers when high-risk capabilities are enabled. Files initialized from a trusted signed baseline begin as \clean. User-provided files, downloaded files, externally synchronized files, and outputs of contaminated decision states begin as \external{} or \tainted. Writes by contaminated decision states mark the target carrier as tainted-derived.

The dynamic enforcement rule implementing RTW is:
\[
\tainted(f)
\land
\rUp(f)
\land
\highcap
\Rightarrow
\deny.
\]
Here \(\highcap\) denotes that the current decision context retains permission to perform high-risk actions without attenuation. This rule dynamically enforces the RTW trace constraint: by denying any exposed read of a tainted carrier while high-risk capabilities remain active, it prevents the runtime from entering any trace that would violate the RTW-safe property. Opaque local reads remain permitted:
\[
\tainted(f)
\land
\rOpq(f)
\Rightarrow
\allow,
\]
provided the local component returns only bounded, sanitized certificates rather than raw content. This preserves workflows such as local validation, parsing, compilation, and format checking without allowing written content to re-enter a high-capability LLM context.

\subsection{Autoloaded Carriers}

Autoloaded carriers require separate handling because the exposed read is system-mediated. If a file is automatically loaded into the agent context, then a write to that file is equivalent to scheduling a future exposed read.

\subsubsection{Static High-Authority Configuration}

Static configuration files define agent identity, policy, or core operational rules. These files are treated as sealed runtime constants:
\[
W(f)=0
\qquad
\text{for } f \in F_c^{static}.
\]
They may be updated only through an offline trusted deployment channel, such as a signed administrative update. At runtime, the agent has no write capability to these files. This removes the agent-mediated attack surface for core identity and security policy files.

\subsubsection{Dynamic Autoloaded State}

Dynamic autoloaded state supports task continuity and long-term adaptation, but it also creates a direct path from current interaction to future LLM context. We separate dynamic carriers into task-local state and long-term memory.

Task-local state is managed through leases. A lease grants bounded write access for a specific task interval and expires automatically:
\[
\lease(f)=[t_0,t_1].
\]
Task-local state has low authority, short lifetime, and cannot encode future tool invocations, permission changes, or cross-session policy.

Long-term memory is not updated through free-form writes. Instead, the runtime separates untrusted candidate memory from trusted autoloaded memory:
\[
M_c \rightarrow M_t.
\]
Automatic summarization, LLM-generated notes, tool outputs, and external content may write only to the candidate store \(M_c\). Entries in \(M_c\) are not autoloaded. A candidate \(c\) is promoted to the trusted memory store \(M_t\) if and only if it satisfies the runtime-defined typed commit policy:
\[
\promote(c)=1
\iff
\schema(c)
\land
\sourceallowed(c)
\land
\scopeallowed(c)
\land
\authority(c)\leq A_{\max}
\land
\ttl(c)\leq T_{\max}.
\]
The biconditional ensures that the policy is both necessary and sufficient: no candidate is promoted without satisfying all conditions, and every candidate satisfying all conditions is eligible for promotion. The LLM may propose memory candidates, but it cannot authorize promotion. The schema registry and promotion policy are defined by the runtime, not by natural-language self-classification. Free-form future instructions, tool permissions, policy updates, cross-user rules, executable commands, and external communication rules fail the \(\schema\) check and are rejected.

When trusted memory is loaded, the runtime renders only a fixed safe projection of \(M_t\) into the LLM context. Raw candidate text, raw evidence, and untrusted summaries are not autoloaded.

\subsection{Layer 2: Capability Attenuation}
\label{sec:attenuation}

RTW prevents file-mediated re-entry from tainted carriers. However, agents must often read unavoidable external sources, including user documents, web pages, emails, and peer-agent messages. These inputs may be tainted on first read, so no prior write-before-read event exists for RTW to block.

For such cases, the second layer applies capability attenuation. After an exposed read of external or untrusted content, the runtime conservatively marks the LLM state as contaminated:
\[
\rUp(s)
\land
\untrusted(s)
\Rightarrow
\contam(\llm).
\]
This layer complements RTW by addressing the case RTW cannot block: tainted content arriving via an unavoidable external source with no prior file write, where no $W_s \prec R_s^{\uparrow}$ subsequence exists for RTW to intercept.

A contaminated LLM state may continue low-risk cognitive operations such as summarization, classification, extraction, and local analysis. Ordinary output files may still be written for usability, but they are marked tainted-derived. High-risk actions are denied or require approval:
\[
\contam(\llm)
\land
\ahigh
\Rightarrow
\deny\ \lor\ \guard.
\]

Capability attenuation is enforced externally by the runtime policy engine, not by the LLM. Thus, the defense does not require the model to correctly distinguish data from instructions. A prompt payload may influence the LLM's proposed action, but it cannot directly grant itself attenuated capabilities under complete mediation of high-risk operations.

Capabilities can be restored only after a context reset or an external declassification procedure. Declassification may include deterministic validation, bounded extraction, typed memory promotion, signed verification, sandboxed execution, or explicit approval. The LLM itself cannot declassify tainted content.

\subsection{Security--Usability Tradeoff}

The framework chooses the least disruptive cut point for each carrier class. For static high-authority configuration, the lowest-cost cut is at write time. For ordinary workspace files, the cut is at exposed re-entry: writes and opaque local reads remain available, but tainted files cannot be exposed to a high-capability LLM. For dynamic memory, the cut is at promotion time: untrusted candidates may be stored, but only typed, scoped, low-authority entries can become autoloaded memory. For unavoidable external reads, the cut is after read time: the read is allowed, but high-risk capabilities are attenuated.

For a carrier \(f\), let \(C_W(f)\), \(C_R(f)\), and \(C_A(f)\) denote the usability cost of cutting the chain at write time, exposed-read time, or high-risk-action time, respectively. The policy goal can be expressed conceptually as selecting a cut set \(\mathcal{C}\) that intersects every dangerous chain while minimizing usability loss:
\[
\min_{\mathcal{C}}
\sum_{c\in\mathcal{C}} C(c)
\quad
\text{s.t.}
\quad
\forall
\bigl(
\wTau(f)
\prec
\rUp(f)
\prec
\ahigh
\bigr),
\quad
\mathcal{C}\cap\{\wTau,\rUp,\ahigh\}\neq\emptyset.
\]
This is a hitting-set formulation and is NP-hard in general; moreover, the set of dangerous chains in an LLM agent system may be infinite and not fully enumerable at design time. We therefore treat this formulation as a conceptual characterization of the security--usability tradeoff rather than a directly solved optimization. In practice, the carrier classification framework instantiates a tractable approximation: each carrier class is assigned a fixed cut point based on its authority and triggerability profile, yielding the layered policy described above.

\subsection{Assumptions and Guarantees}

The guarantees rely on four assumptions. First, high-risk actions and exposed reads are completely mediated by the runtime. Second, carrier labels are persistent across sessions and associated with underlying carrier identity rather than transient path names alone. Third, trusted initialization distinguishes signed or verified clean state from external or user-provided state. Fourth, declassification is performed by mechanisms external to the LLM.

Under these assumptions, the RTW enforcement rule prevents file-mediated re-entry of tainted content into a high-capability decision context, including multi-carrier propagation chains: any such chain must produce a $W_f \prec R_f^{\uparrow}$ violation on the carrier through which tainted content enters the LLM, which RTW directly eliminates. Capability attenuation bounds the residual risk from unavoidable external reads with no prior file write, by preventing contaminated decision states from performing high-risk actions. Together, the two layers cover the complete threat surface and prevent attacker-controlled content from completing the full persistence--re-entry--action chain while preserving ordinary data-processing workflows.
\section{Discussion}

Autonomous propagation in LLM agent ecosystems represents a shift from exploit-driven compromise to semantics-driven compromise. The attacker does not need to exploit a memory-safety bug. The attacker instead exploits the agent's normal mechanism for interpreting natural-language state.

\paragraph{Read as an integrity event.}
In file-backed agents, exposed reads affect integrity because they influence the controller. This does not mean that all reads are dangerous. Opaque local reads by deterministic tools can remain safe and useful. The dangerous operation is exposed re-entry of tainted content into a decision context that retains high-risk capabilities.

\paragraph{Temporal separation.}
The write and the dangerous read need not be adjacent. A tainted carrier may be written during one turn and read much later during a scheduled task, future session, restart, or another agent's interaction. This temporal separation complicates defenses based on per-turn prompt filtering and motivates persistent carrier labels.

\paragraph{Trust-based lateral movement.}
Multi-agent environments create trust-based delegation channels. A low-capability agent can influence a higher-capability agent if the latter processes the former's output as operational context. This resembles lateral movement, but the propagation medium is natural language rather than executable code.

\section{Related Work}
 
\subsection{Prompt Injection and Instruction Hierarchy}
 
Recent work demonstrates that while LLMs generally prioritize system-level instructions, they remain vulnerable to sophisticated prompt injection attacks where malicious content embedded within user inputs can subvert intended behavior~\cite{wallace2024instruction}. Empirical studies show that models exhibit recency and proximity bias: instructions appearing later in the context window, particularly those resembling direct user directives rather than background configuration, receive disproportionate attention and compliance~\cite{liu2024lost}. 
 
Structured queries attempt to separate prompt and data channels~\cite{chen2024struq}, while instruction-hierarchy training teaches models to prioritize privileged instructions over untrusted content~\cite{wallace2024instruction}. These approaches reduce semantic confusion but do not directly address file-backed persistence, delayed re-entry, and scheduled autoloading in autonomous agent environments.
 
\subsection{Self-Replicating Prompts and LLM Worms}
 
Morris II demonstrated that adversarial self-replicating prompts can create worm-like cascades in GenAI-powered applications, establishing the feasibility of autonomous propagation in LLM-based systems~\cite{cohen2024morrisii}. Prompt Infection further showed that malicious instructions can propagate through direct LLM-to-LLM communication in multi-agent systems~\cite{promptinfection2024}. These works establish the foundational threat model but are limited to specific application scenarios, manually constructed attack vectors, and direct message-based propagation without persistence across session boundaries.

Our work extends this line of research to a substantially more capable threat: a fully automated, zero-click, zero-human-interaction worm that propagates across heterogeneous production agent frameworks through file-backed persistent carriers. Unlike prior work, our attack survives agent restarts through autoloaded carrier re-entry, operates without continuous attacker presence, and traverses framework boundaries without platform-specific adaptation.

Concurrent and independent work~\cite{zhang2026clawworm} demonstrates worm feasibility on a specific platform through manually constructed attack vectors. Our work addresses fundamentally different research questions: automated cross-platform carrier discovery without per-platform customization (\sscgv{}), payload robustness under realistic summarization and paraphrasing pipelines (\srpo{}), zero-click autonomous propagation across heterogeneous framework boundaries including 3-hop cross-platform transmission chains, and formal defense guarantees with mathematical proofs. We additionally identify two empirical insights that prior work does not report: that context injection position determines carrier exploitability, and that in LLM-mediated systems read operations can be more dangerous than write operations, inverting traditional integrity threat models.
 
\subsection{Agent Memory Poisoning}
 
Agent memory and RAG-style knowledge stores are increasingly recognized as attack surfaces. AgentPoison studies poisoning of memory or knowledge bases used by LLM agents, demonstrating that corrupted memory can persist and influence future agent behavior~\cite{chen2024agentpoison}. MINJA demonstrates memory injection attacks through query-only interaction without requiring direct access to memory banks~\cite{dong2025minja}.
 
These works motivate our treatment of memory as a dynamic autoloaded carrier that requires special handling in defense frameworks. Our approach differs by separating untrusted candidate memory from trusted typed memory and requiring runtime-mediated promotion rather than allowing free-form memory writes that could introduce persistent contamination.

\section{Limitations}

Our analysis is limited to frameworks for which source code or sufficient runtime instrumentation is available. Closed-source systems may expose similar carrier classes, but \sscgv{} requires source-level or trace-level visibility to locate exposed-read paths automatically.

The attack evaluation uses controlled testbeds and anonymized frameworks. We do not claim that every deployment of the evaluated frameworks is exploitable. Actual risk depends on configuration, messaging-channel exposure, tool permissions, and whether the relevant carriers are writable and later exposed.

The defense framework requires runtime support for carrier labeling, exposed-read mediation, and high-risk action mediation. Systems without a central runtime or with unmanaged direct filesystem access would require additional OS-level or container-level enforcement.

Finally, typed memory promotion limits trusted autoloaded memory to structured low-authority entries. This improves safety but may reduce the expressiveness of free-form memory. We view this as an intentional security--usability tradeoff rather than a complete solution to all memory usability requirements.

\section{Ethics Considerations}

This work studies vulnerabilities in deployed classes of LLM agent frameworks. To reduce risk to users while coordinated disclosure is ongoing, this preprint anonymizes the affected frameworks and omits exploit-enabling details, including exact payload templates, platform-specific carrier paths, exploitation scripts, and version identifiers. We disclosed our findings to the affected maintainers before public release of this preprint and are coordinating timelines for remediation and public attribution.

All experiments were conducted in controlled testbeds using accounts and agents operated by the authors. We did not test against third-party deployments, exfiltrate user data, or interact with public users. The paper reports aggregate results and carrier classes rather than step-by-step exploit instructions. We will release additional implementation details, artifacts, and platform names after mitigations are available or after an agreed coordinated disclosure date.

\section{Conclusion}

This paper presents a systematic study of persistent worm propagation in file-backed LLM agent ecosystems. We introduce \sscgv{} for automated discovery of injectable persistent carriers and \srpo{} for studying payload robustness under LLM-mediated summarization, paraphrasing, and compression. Our controlled evaluation across three anonymized open-source frameworks shows that persistent carriers, exposed reads, and messaging-channel co-presence can enable multi-hop propagation without traditional software exploitation. We also develop a temporal re-entry defense framework. RTW prevents same-carrier write-before-exposed-read re-entry; sealed configuration protects static high-authority files; typed memory promotion prevents free-form summaries from becoming trusted long-term memory; and capability attenuation limits high-risk actions after unavoidable exposed reads. Together, these mechanisms provide a foundation for securing autonomous agent ecosystems while preserving useful data-processing workflows.

\section*{Acknowledgments}

We thank the maintainers and community members who provided feedback during coordinated disclosure. This work was conducted in controlled research environments.

\end{document}